\newtheorem{theorem}{Theorem}
\newtheorem{definition}{Definition}
\newtheorem{lemma}{Lemma}
\newtheorem{proposition}{Proposition}
\newtheorem{corollary}{Corollary}
\def\diag{\mathop{\sf diag}}
\def\blkdiag{\mathop{\sf blkdiag}}
\def\T{\intercal}
\begin{document}
\title{
Qualitative stability of nonlinear networked systems
}

\author{Marco Tulio Angulo
        and~Jean-Jacques~Slotine
\thanks{M.T. Angulo (darkbyte@gmail.com) is with the Center for Complex Network Research (CCNR), Northeastern University; and with
Channing Division of Network Medicine, Brigham and Women's Hospital, and Harvard Medical School, Boston MA 02115, USA.
J.-J. Slotine (jjs@mit.edu) is with the Nonlinear Systems Laboratory, the Department of Mechanical Engineering and the Department of Brain and Cognitive Sciences, MIT, Cambridge, Massachusetts 02139, USA.}
}

\maketitle

\begin{abstract}

In many large systems, such as those encountered in biology or economics, the dynamics are nonlinear and are only known very coarsely. It is often the case,
however, that the signs (excitation or inhibition) of individual interactions are known. This paper extends to nonlinear systems  the classical criteria of linear sign stability
introduced in the 70's, yielding simple sufficient conditions to determine stability using only the sign patterns of the interactions.
\end{abstract}

\begin{IEEEkeywords}
nonlinear system, networks, stability.
\end{IEEEkeywords}

\IEEEpeerreviewmaketitle


\section{Introduction}

It is challenging to determine the stability of large complex systems ---such as gene regulation networks, ecological systems and economic markets--- 
 not only because of their size, but because in practice we only know their dynamics very coarsely. 
Our  knowledge of these large complex systems  is  often limited to the interaction network  between the system's components (nodes) and the sign-pattern of the network  \cite{Barabasi:15}.
In gene regulation systems,  for instance, we know the complete list of genes  ---the  nodes of the network--- and the signs of their interactions ---positive interactions are activations and negative ones inhibitions  \cite{sontag2004some,kholodenko2002untangling}.
Yet, since the edge-weights and dynamic model associated to these networks are  unknown, this  provides ``qualitative'' information of the system only.

The classical \emph{sign-stability criterion} introduced in the 70's opened the door to study the qualitative stability of linear systems, requiring to know only the 
sign-pattern of their interconnection network in order to conclude stability  \cite{maybe1969qualitative,jeffries1987qualitative}.
Its impact was profound in diverse fields including ecology \cite{may1973qualitative}, economy \cite{quirk1965qualitative}, and more recently  control \cite{devarakonda2010engineering} and network science \cite{Angulo:15}. 
%
%
However,  the classical sign-stability criterion cannot be applied to nonlinear phenomena such as oscillations, bistability, chaos and so on that often appear in engineering, biological and economic systems
\cite{strogatz2014nonlinear}.
For nonlinear systems, stability is preserved under cascade (i.e., series) interconnections under quite general conditions  \cite{mezic2004coupled, russo2011graphical}.
Small-gain conditions  have also been proposed to understand the stability of nonlinear systems with particular interconnection networks \cite{sontag2006passivity,arcak2011diagonal}, 
but these conditions are not completely qualitative  as we need to establish the small-gain property of the nodal dynamics. 
Thus, a qualitative stability criterion  for nonlinear systems beyond the ``cascade of stable systems is stable'' is lacking.

In this note, we use  contraction theory \cite{lohmiller1998contraction} in order to extend the sign-stability criterion to nonlinear systems.  
Remarkably, almost the same conditions as in the linear sign-stability criterion  imply stability in nonlinear systems.
Indeed, the conditions for sign-stability of linear and nonlinear systems are identical if the ``asymmetries'' of the system are constant.
Otherwise it becomes necessary that the intrinsic stability of isolated nodes  is sufficiently strong.
%
%
The rest of the note is organized as follows. 
Section \ref{statement-mainresults} presents the problem statement and our main results, which are proved in Section III. Section IV extends our nonlinear sign-stability criterion to delayed interconnections and modules (i.e., sets of nodes), and presents a  discussion on the conditions for stability. Section V contains some concluding remarks.


\section{Problem statement and main results}
\label{statement-mainresults}

Consider a system composed of $n$  nodes where the scalar $x_i(t)$ denotes the state of node $i$ at time $t$. 
The state of a node may represent the abundance of certain specie in an ecological system, or the expression level of some gene in a gene regulatory network, and so on.
Suppose we are given a directed network $\mathcal G =({\sf V}, {\sf E})$ associated to the system with vertex set ${\sf V}=\{{\sf x}_1, \cdots, {\sf x}_n \}$  and  edge set ${\sf E}$. An edge  ${\sf x}_j \rightarrow {\sf x}_i$ in the network corresponds to a \emph{direct} influence of  $x_j$ on $x_i$.
We are also given the ``sign'' of each edge in the network, associating positive sign to ``activation'' and negative sign to ``repression''.

Our objective is to characterize sufficient conditions for the stability of the system from the knowledge of $\mathcal G$.
To address this problem, let us assume that the system dynamics satisfy the differential equations
\begin{equation}
\label{system}
\dot x_i = f_i (x_1, \cdots, x_n, t), \qquad i=1,\cdots, n, 
\end{equation}
for some smooth functions $f_i(x,t): \mathbb R^{n} \times \mathbb R_+ \rightarrow \mathbb R$, $x =(x_1, \cdots, x_n)^\T$. We might not know these functions exactly, but we assume we know some of their structure as precised later on.

Under model \eqref{system}, the edge ${\sf x}_j \rightarrow {\sf x}_i$ exists in $\mathcal G$ if and only if the function  $a_{ij}(x,t)=\partial f_i(x,t) / \partial x_j$ is not identically zero. Similarly, this edge is positive if $a_{ij}(x,t)>0$ and negative if $a_{ij}(x,t)<0$  \cite{kholodenko2002untangling}.
In general, the sign of $a_{ij}(x,t)$ changes with time but this will not introduce any conceptual difficulty to our analysis.
In particular, when the functions $f_i(x,t)$ are time invariant and monotone in $x$ ---as in most models of gene regulation--- the sign of $a_{ij}(x,t)$ remains constant.

We  use  the following notion of stability \cite{lohmiller1998contraction}:

\begin{definition} System \eqref{system} is contracting if for any two initial conditions their corresponding trajectories converge exponentially fast towards each other. 
\end{definition}

A contracting system forgets exponentially fast its initial condition converging towards a unique trajectory.  In contrast to Lyapunov stability, such limiting trajectory is not necessarily constant (i.e., an equilibrium).
Historically, basic convergence results on contracting systems can be traced back to \cite{lewis1949metric} using Finsler metrics, and also to  \cite{hartman1961stability,demidovich1962dissipativity} and \cite{desoer1972measure}.

Our main result is:

\begin{theorem} System \eqref{system} is contracting provided that:
\begin{itemize}
\item[(i)] Reciprocate interactions have opposite signs:
if  ${\sf x}_i \rightarrow {\sf x}_j$ and ${\sf x}_j \rightarrow {\sf x}_i$ are edges in the network, $ i \neq j$, then $a_{ji}(x,t) = - b_{ij} (x,t) a_{ij}(x,t)$ for some  function $b_{ij}(x,t) > 0$.
\item[(ii)] Self-loops are negative and strong enough to overcome the logarithmic derivative of the asymmetries: there exists functions $\alpha_i(x,t) >0$ such that  $a_{ii}(x,t) = - \alpha_i(x,t)$ and 
$$\alpha_{i}(x,t) > \frac{1}{2} \sum_{j \in \mathcal N_i} \left. \frac{\dot b_{ij}(x,t)}{ b_{ij}(x,t)} \right., \quad i=1, \cdots, n,$$
where  $\mathcal N_i$ is the set of feedback neighbors of node $i$.
\item[(iii)] $\mathcal G$ does not contain cycles of length $3$ or more.
\end{itemize}
\end{theorem}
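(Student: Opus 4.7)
The plan is to apply contraction theory (Lohmiller--Slotine): exhibit a uniformly positive-definite metric $M(x,t)$ and a rate $\lambda>0$ such that $\dot M + J^\T M + MJ \preceq -2\lambda\,M$, where $J=\partial f/\partial x$ is the Jacobian of \eqref{system}. The differential Lyapunov identity for $V(\delta x)=\delta x^\T M\,\delta x$ then forces neighboring virtual displacements to decay exponentially, which is contraction in the sense of the definition above.

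I would take $M=\diag(d_1,\ldots,d_n)$ diagonal, with positive scalar functions $d_i(x,t)$ dictated by the graph. Since $(MJ+J^\T M)_{ij}=d_i a_{ij}+d_j a_{ji}$ and condition~(i) gives $a_{ji}=-b_{ij}\,a_{ij}$ on every reciprocate pair, this entry equals $(d_i-d_j\,b_{ij})\,a_{ij}$, which vanishes precisely when $d_i=d_j\,b_{ij}$. Imposing this relation on every reciprocate pair cancels the entire reciprocate contribution to the off-diagonal of the symmetric part of $MJ$. Condition~(iii) is what makes this cancellation globally consistent: any undirected cycle of length $\geq 3$ in the reciprocate-pair graph would lift to a forbidden directed $k$-cycle in $\mathcal G$, so the undirected reciprocate graph is a forest, and the constraints $d_i/d_j=b_{ij}$ determine the $d_i$ uniquely on each tree up to one positive multiplicative scaling. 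Moreover no one-way edge can belong to any directed cycle either, so the one-way edges form a DAG between the trees.

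Once the reciprocate off-diagonals are zeroed, the condition $\dot M+MJ+J^\T M\prec 0$ reduces to the per-node inequalities $\alpha_i>\dot d_i/(2d_i)$, together with a control argument for any remaining one-way edges. Condition~(ii) is tailored to deliver these diagonal inequalities: by selecting the free per-tree scaling appropriately and absorbing any slack into $\alpha_i>0$, one arranges $\dot d_i/(2d_i)\leq \tfrac12\sum_{j\in\mathcal N_i}\dot b_{ij}/b_{ij}$, so the hypothesis closes the LMI with uniform strict negativity. The hard part will be precisely this matching on reciprocate trees of three or more nodes: there $\dot d_i/d_i$ is naturally a telescoping path-sum of $\dot b/b$'s from the root, not a sum over immediate neighbors, so the per-tree scaling (possibly state-dependent) must be chosen with care for condition~(ii) to dominate node-by-node. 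The residual one-way edges between distinct trees would then be absorbed by a standard hierarchical (cascade) contraction argument along the inter-tree DAG, scaling successive layers by decreasing powers of a small parameter so that inter-layer cross-terms are dominated by the within-layer contraction already established.
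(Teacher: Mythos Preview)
Your proposal is essentially the paper's proof: it too constructs a diagonal metric on each maximal reciprocate subgraph (which it calls a \emph{feedback chain}, built recursively by induction in Proposition~1 rather than by your direct forest argument), observes that the remaining one-way edges arrange those chains in cascade (Lemmas~1--2 and Proposition~2, i.e.\ your inter-tree DAG), and then invokes preservation of contraction under cascades rather than your explicit $\varepsilon$-layer scaling. Your flagged ``hard part'' about path-sums versus neighbor-sums of $\dot b/b$ is precisely the step the paper handles via the asserted identity $d_{ii}=\prod_{j\in\bar{\mathcal N}_i}b_j$ in the inductive step of Proposition~1.
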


\begin{corollary} Suppose that the asymmetries $b_{ij}>0$ are constant (i.e., reciprocate interactions have the same functional form). Then \eqref{system} is contracting if $a_{ii}(x,t)<0$ and $\mathcal G$ does not have cycles of length $3$ or more. 
\end{corollary}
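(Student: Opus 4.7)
The plan is to obtain the corollary as an immediate specialization of Theorem~1 by checking that each of its three hypotheses either holds trivially under the corollary's assumptions or is assumed outright. The only step requiring any care is the verification that condition~(ii) degenerates.

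First, condition~(i) is precisely the standing assumption of the corollary: every reciprocate pair of edges satisfies $a_{ji}(x,t) = -b_{ij}\,a_{ij}(x,t)$ with $b_{ij}>0$ a positive constant, which is a valid (constant-valued) choice of the positive function in~(i). Condition~(iii) is assumed verbatim, so nothing needs to be done there.

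For condition~(ii), I would define $\alpha_i(x,t) := -a_{ii}(x,t)$. The hypothesis $a_{ii}(x,t)<0$ then gives $\alpha_i(x,t)>0$ together with $a_{ii}(x,t) = -\alpha_i(x,t)$, as required by the theorem. Since each $b_{ij}$ is a positive constant, $\dot b_{ij}(x,t)\equiv 0$ along any trajectory, so the logarithmic-derivative sum on the right-hand side of the inequality in~(ii) vanishes identically and the inequality reduces to $\alpha_i(x,t) > 0$, which is exactly what we have just established.

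With all three conditions of Theorem~1 verified, the theorem applies and \eqref{system} is contracting. I do not expect any genuine obstacle in this argument; the point of the corollary is simply to highlight the regime in which the ``self-loops dominate the asymmetry drift'' requirement of~(ii) collapses to the bare negativity of the diagonal, recovering the same structural conditions that appear in the classical linear sign-stability criterion.
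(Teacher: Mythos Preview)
Your argument is correct and is exactly the paper's own approach: Corollary~1 is derived from Theorem~1 by noting that constant $b_{ij}$ makes the logarithmic derivative $\dot b_{ij}/b_{ij}$ vanish, so condition~(ii) reduces to $\alpha_i(x,t)>0$, i.e., $a_{ii}(x,t)<0$, while conditions~(i) and~(iii) are the standing hypotheses.
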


We say that \eqref{system} is \emph{sign-stable} if it satisfies  the conditions of Theorem 1.  When these conditions hold in a region $\Omega \subseteq \mathbb R^n$ only, we say that \eqref{system} is sign-stable in $\Omega$.
The functions $b_{ij}(x,t)$ in condition (i) represent the \emph{asymmetry} between the edges involved in a negative feedback loop, Fig.1a. 
Corollary 1 implies that the sign-stability conditions for linear and nonlinear systems are identical if the asymmetries in the system are constant.
When the asymmetries are not constant, condition (ii) requires that the intrinsic stability of the nodes (represented by their contraction rates $\alpha_i(x,t)>0$) dominates their logarithmic derivative. 
This condition of sufficiently large contraction rates of the isolated nodes turns out to be necessary for contraction with time-varying asymmetries, and it can also be stated in terms of how ``fast'' the asymmetry needs to be (Section IV-B).
The  \emph{feedback neighbors} $\mathcal N_i$ of node  ${\sf x}_i$ is the set of nodes ${\sf x}_j$ such that the edges ${\sf x}_i \rightarrow {\sf x}_j$ and ${\sf x}_j \rightarrow {\sf x}_i$ exist.
A \emph{cycle} in $\mathcal G$ is a sequence of edges (excluding self loops) that starts and ends in the same node,  and its length is the number of edges it contains. 
For example $\{ {\sf x}_i \rightarrow {\sf x}_j,  {\sf x}_j \rightarrow  {\sf x}_k,  {\sf x}_k \rightarrow  {\sf x}_i \}$ with $i\neq j \neq k$ is a cycle of length 3. 
Condition (iii) is related to the fact that systems with cycles of length 3 require specific tuning of their edge-weights to be stable.
In the case of linear systems with $n=3$, this can be seen from the root-locus diagram: there are three branches and, by symmetry, at least one branch crosses to the right half-plane of the complex plane. 
Consequently, the stability depends on the edge-weights of the network and we cannot determine stability based on its signs only.
In Section IV we  extend Theorem 1 to time-delayed interconnections, and to interconnections of modules instead of nodes.
%
When the functions $f_i$ are linear and time invariant, our result reduces to the classical sign-stability criterion  \cite{maybe1969qualitative,jeffries1987qualitative}. 

\vspace*{0.1cm}

\noindent \emph{Example 1:} Applying Corollary 1, the system
$\dot x_1 = -x_1 - x_1 x_2,  \dot x_2 = x_1^2 - x_2 - x_2 x_3,   \dot x_3 = x_2^2 - x_3  $
is sign-stable in the region $\Omega =\{x \in \mathbb R^3 | x_2 > -1, x_3>-1\}$. Indeed, it satisfies condition (i) with $b_{12}=b_{23}=2$,
and condition (ii)  because $a_{11} = -1 - x_2 <0$, $a_{22}=  -1 - x_3 <0$, $a_{33} = -1$. Condition (iii) is satisfied because its associated network  does not have cycles of length 3 or more,  Fig.\ref{fig:schwarz-forms}b.   Furthermore, the origin $x \equiv 0$ is exponentially stable since it is a trajectory in $\Omega$.
\vspace*{0.1cm}

Our proof of  the main result is based on two observations. First,  the network of a sign-stable system is composed of cascades of what we call ``feedback chains'' (Proposition 2). 
Feedback chains are systems recursively  built using negative feedback interconnections under the constraint that they do not have cycles of length 3 or more.
Second,  feedback chains are contracting provided that their nodes, when isolated, have a large enough contraction rate to dominate the logarithmic derivative of the asymmetry of the system (Proposition 1).  
From these two observations, the proof of Theorem 1 follows directly, because the cascade interconnection of contracting systems remains contracting \cite{lohmiller1998contraction}.


\section{Proof of the main result}

\begin{figure*}[htbp]
\begin{center}
\includegraphics[width=18cm]{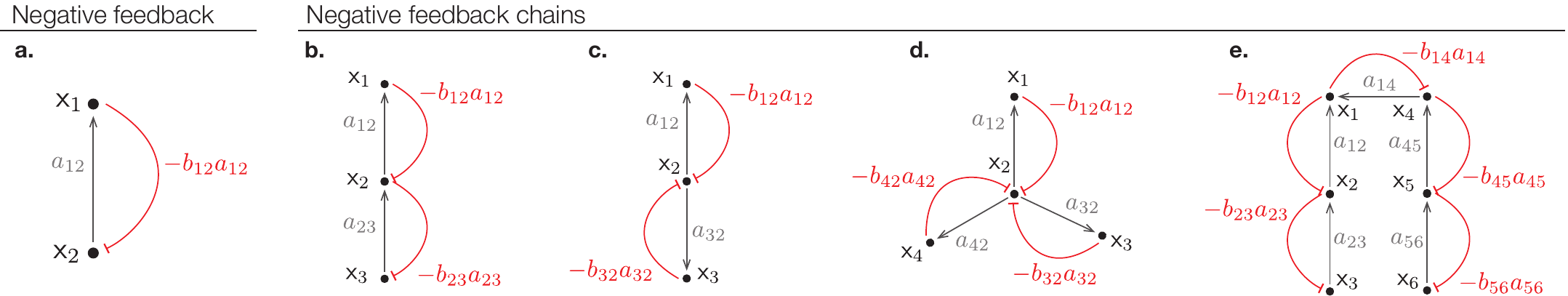}
\caption{{\bf  Two-node negative feedback interconnection, and some examples of negative feedback chains.} 
In all networks shown, the sign of  $a_{ij}$ may change with time because it is a function of $x(t)$ and $t$. Hence, the different edge representation for pairs of activation ``$\rightarrow$'' and inhibition ``$\dashv$'' indicates that $a_{ij}$ and $b_{ij} a_{ij}$ have opposite signs at each time instant.
All nodes have negative self-loops $a_{ii}= -\alpha_i  <0$, which are omitted when a network is displayed in order to improve readability.
}
\label{fig:schwarz-forms}
\vspace*{-0.5cm}
\end{center}
\end{figure*}

\subsection{Contraction theory, and its tools to prove stability.} Let $f= (f_1, \cdots, f_n)^\T$ allowing us to rewrite \eqref{system} as $\dot x = f(x, t)$. 
Denote its Jacobian  by $A(x,t) = \tfrac{\partial f(x,t)}{\partial x}$, and note its $(i,j)$ entry is the function $a_{ij}(x,t)$ used earlier to construct the network $\mathcal G$.
Hereafter, we often omit the arguments of the functions to improve readability unless they are relevant for the discussion.
Contraction theory uses the fact that exponential stability of the  \emph{differential system}  $\dot{\delta}_x(t) = A(x,t) \delta_x(t)$ implies that system \eqref{system} is contracting \cite{lohmiller1998contraction}.
In order to prove contraction, a necessary and sufficient condition is a symmetric positive-definite matrix $M(t, x)$ such that
\begin{equation} 
\label{F-eq}
L(A,M)=\dot M + A^\T M + M A \prec 0.
\end{equation}
The matrix $M$ introduces a metric in the differential coordinates, and we say that the system is \emph{diagonally contracting} if $M$ in \eqref{F-eq} can be taken diagonal.  
Changes of coordinates are also useful, as illustrated in Example 4.


\subsection{Feedback chains, and their diagonal contraction properties.}

The basic building block for our analysis will be  the  negative feedback interconnection for $n=2$ nodes shown in Fig.\ref{fig:schwarz-forms}a. Its corresponding Jacobian is
\begin{equation}
\label{system-2}
 A_{12}=  \begin{pmatrix} -\alpha_1 & a_{12} \\ -b_{12} a_{12}& - \alpha_2 \end{pmatrix},
\end{equation}
with $\alpha_i(x,t)>0$ and $b_{12}(x,t) >0$.  
The particular structure of \eqref{system-2} suggests using the diagonal metric $D_{12} = \diag \{ b_{12}, 1 \} $ to prove contraction:
$$L(A_{12},D_{12}) = \begin{pmatrix}-2 \alpha_1 b_{12}+ \dot b_{12} & 0 \\ 0 & -2 \alpha_2 \end{pmatrix}$$
which will be uniformly negative-definite if
\begin{equation*}
\label{twonode-stability}
\alpha_1 >  0.5 \, \, \dot b_{12} / b_{12} , \quad \alpha_2>0.
\end{equation*}
Hence a negative feedback interconnection between two nodes is contracting provided that condition (ii) holds. 
In this case not only the metric $D_{12}$ is diagonal, but also $L(A_{12}, D_{12})$ is diagonal.
This property will be instrumental  in order to extend this result to more general feedback interconnections, which we call ``feedback chains''.
For linear systems, the diagonal stability  of systems with the structure \eqref{system-2}  is consequence of the so-called Schwarz form
 \cite{shorten2014classical}.
Here contraction theory naturally extends this  property to nonlinear systems.

 The systems shown from Fig.\ref{fig:schwarz-forms}c to  Fig.\ref{fig:schwarz-forms}e are other examples of negative feedback interconnections. Their linear versions are no longer Schwarz forms of any order, so we call them \emph{feedback chains}.
Feedback chains are recursively built starting from a negative feedback interconnection between two nodes,  and adding new nodes by interconnecting each one of them to a single existing node using negative feedback (i.e.,  reciprocate interactions with $b_{ij} >0$).
By  their construction, feedback chains do not contain cycles of length 3 or more.

\vspace*{0.1cm}

\noindent \emph{Example 2:} Consider Fig.\ref{fig:schwarz-forms}c and let $A_{123}$ denote its  Jacobian
$$A_{123}= \begin{pmatrix} -\alpha_1 & a_{12} & 0 \\ -b_{12} a_{12} & -\alpha_2 & - b_{32}a_{32} \\ 0 & a_{32} & - \alpha_3\end{pmatrix}. $$
If condition (ii) holds, the first cycle  $\{x_1, x_2\}$ is contracting with metric $D_{12} = \diag \{ b_{12}, 1 \} $; similarly, the second one $\{x_2, x_3\}$ is contracting with metric $D_{23}=\diag \{1, b_{32} \}$.  
We  naturally combine both metrics $D_{123}=\diag\{ b_{12},  1, b_{32} \} $ obtaining
\begin{equation*}
\label{example2}
L(A_{123}, D_{123}) = \begin{pmatrix} L(A_{12}, D_{12}) & 0_{2 \times 1} \\ 0_{1 \times 2} & -2 \alpha_3 b_{32} + \dot b_{32}\end{pmatrix} 
\end{equation*}
proving that the system corresponding to Fig.\ref{fig:schwarz-forms}c is contracting provided that node ${\sf x}_3$ satisfies condition (ii).  Here $L(A_{123}, D_{123}) $ is again diagonal, and recursively depends on $L(A_{12}, D_{12})$. The system in Fig. \ref{fig:schwarz-forms}b. can be similarly analyzed  defining   $a_{32} = -b_{23} a_{23}$ and $b_{32}=b_{23}^{-1}$.

\noindent Now consider network Fig.\ref{fig:schwarz-forms}d  obtained by adding the node ${\sf x}_4$ to the feedback chain of Fig.\ref{fig:schwarz-forms}c. For  subsystem $A_{123}$  we  have the metric $D_{123}$, and for the subsystem $A_{24}$ we have the metric $D_{24}=\diag\{1, b_{42} \}$. Naturally combining both metrics $D_{1234}=\diag\{b_{12},1, b_{32} ,b_{42} \}$ we obtain %
$$L(A_{1234},D_{1234}) = \begin{pmatrix} L(A_{123}, D_{123}) & 0_{3 \times 1} \\ 0_{1 \times 3} & - 2 \alpha_4 b_{42} + \dot b_{42}\end{pmatrix}  $$
 which is again diagonal and proves that the system corresponding to network Fig.\ref{fig:schwarz-forms}d  is contracting if it satisfies condition (ii).

The above example show how to recursively build diagonal  metrics to prove contraction of feedback chains, illustrating the following result.

\begin{proposition} Provided that nodes satisfy condition (ii), a negative feedback chain is diagonally contracting.
\end{proposition}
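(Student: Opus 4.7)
The plan is to prove Proposition 1 by induction on the number of nodes $n$ in the feedback chain, exploiting the recursive construction by which such chains are defined. For the base case $n=2$, I would invoke the analysis of the two-node negative feedback interconnection in the paragraph preceding the proposition: the diagonal metric $D_{12}=\diag\{b_{12},1\}$ makes $L(A_{12},D_{12})$ diagonal with strictly negative entries under condition (ii), which is exactly the desired conclusion.

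For the inductive step, I would assume that any $k$-node feedback chain $\mathcal{C}_k$ admits a diagonal metric $D_k$ making $L(A_{\mathcal{C}_k},D_k)$ itself diagonal and uniformly negative definite. By definition, a $(k{+}1)$-node chain $\mathcal{C}_{k+1}$ is obtained by attaching a new node $x_{k+1}$ to a single existing node $x_j\in\mathcal{C}_k$ via negative feedback, with a free edge $a_{k+1,j}(x,t)$ and its reciprocal $a_{j,k+1}=-b(x,t)\,a_{k+1,j}$ for an asymmetry $b>0$. I would extend the metric as $D_{k+1}=\blkdiag\{D_k,m_{k+1}\}$, choosing $m_{k+1}=b\cdot(D_k)_{jj}$ so as to cancel the new cross term exactly as in the two-node calculation. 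Writing $L(A_{\mathcal{C}_{k+1}},D_{k+1})$ as a $2\times 2$ block matrix: the upper-left $k\times k$ block coincides with $L(A_{\mathcal{C}_k},D_k)$ and is diagonal and negative definite by the inductive hypothesis, because the self-loops and intra-$\mathcal{C}_k$ entries of the Jacobian are unchanged; the only potentially nonzero entry in the new off-diagonal block sits at position $(j,k{+}1)$ (all others vanish because $x_{k+1}$ connects only to $x_j$, which is a direct consequence of the ``no cycles of length three or more'' constraint) and it too vanishes by the choice of $m_{k+1}$; and the new bottom-right scalar is $-2\alpha_{k+1}m_{k+1}+\dot m_{k+1}$, whose sign must be settled by condition (ii).

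The main obstacle I expect is this last scalar inequality. Since $m_{k+1}$ is defined recursively as $b$ times the parent's weight, it accumulates a product of asymmetry factors along the path in the tree back to the seed pair, so $\dot m_{k+1}/m_{k+1}$ is a telescoping sum of logarithmic derivatives of individual $b$'s, while condition (ii) at the new (leaf) node only produces the single term $\dot b/b$. I would reconcile the two by exploiting the convention $b_{ji}=1/b_{ij}$, so that each $\dot b/b$ along the path appears with opposite signs in the condition-(ii) inequalities at its two endpoints, and then redistributing slack between internal nodes (whose condition-(ii) bounds are much more generous than what the metric demands) and the new leaf. The algebra needed to close this ledger is the one genuinely delicate point; the block-triangular structure, the cancellation of the off-diagonal entry, and the preservation of the inductive hypothesis are all straightforward. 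With that bookkeeping in place, the induction closes, producing a positive-definite diagonal metric and a strictly negative-definite diagonal $L$ on any feedback chain, which is diagonal contraction in the sense of \eqref{F-eq}.
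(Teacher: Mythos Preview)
Your plan is essentially the paper's proof. The paper also argues by induction on the chain length, relabels so that the new node ${\sf x}_{i+1}$ attaches to ${\sf x}_i$, extends the metric to $D_{i+1}=\blkdiag\{D_i,\,d_{ii}\,b_{i+1}\}$ (your $m_{k+1}=(D_k)_{jj}\,b$), uses $D_i\bm e_i=d_{ii}\bm e_i$ to kill the off-diagonal block, and lands on a diagonal $L(A_{i+1},D_{i+1})$ whose new scalar is $-2\alpha_{i+1}d_{ii}b_{i+1}+\tfrac{d}{dt}\{d_{ii}b_{i+1}\}$.

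Where you and the paper part ways is precisely the step you call delicate. The paper does not perform any slack redistribution: it writes $d_{ii}=\prod_{j\in\bar{\mathcal N}_i}b_j$, expands the logarithmic derivative of the product, divides through by $\prod_{j\in\mathcal N_i}b_j$, and asserts that the resulting inequality \emph{is} condition~(ii). Your observation that the accumulated weight $m_{k+1}$ is a product along the path to the root---so that $\dot m_{k+1}/m_{k+1}$ is a path-sum of $\dot b/b$ terms rather than the neighbor-sum in condition~(ii) for the new leaf---is a genuine point, and the ledger-balancing you sketch goes beyond what the paper actually writes. In short, your inductive machinery matches the paper exactly; you are simply more scrupulous than the paper at the one place it is brisk.
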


\begin{proof} We prove the claim by induction on the dimension of the chain. Let $\Sigma_i$ denote a feedback chain of dimension $i$ and  $A_i$ its  Jacobian. 
We have shown that $\Sigma_2$ (i.e.,  two-node negative feedback) is diagonally contracting: there exists a diagonal metric  $D_2$ such that $L(A_2, D_2) \prec 0$. Furthermore, $L(A_2, D_2)$  is also diagonal. 

Now suppose that that $\Sigma_{i}$ is diagonally contracting and  $L(A_i, D_{i}) \prec 0$ is  diagonal. Let ${\sf x}_{i+1}$ be the new added node. By relabeling the nodes in $\Sigma_i$, we assume that ${\sf x}_{i+1}$ will be connected to ${\sf x}_i$ without loss of generality. 
Therefore the Jacobian of $\Sigma_{i+1}$ is
\begin{equation}
\label{G_i+1}
A_{i+1} = \begin{pmatrix} A_i & -b_{i+1 }a_{i+1} \bm e_i \\ a_{i+1} \bm e_i^\T & -\alpha_{i+1} \end{pmatrix} 
\end{equation}
where $\bm e_i = (0, \cdots, 0, 1) \in \mathbb R^i$. Here we have used $a_{i+1}$ and $b_{i+1}$ instead of $a_{i+1, i}$ and $b_{i+1, i}$ to simplify the notation.
As noted in Example 2, we may rewrite the Jacobian using $a_{i,i+1}=-b_{i+1, i}a_{i+1, i}$ and $b_{i, i+1}=b_{i+1,i}^{-1}$ so the expression \eqref{G_i+1} can be considered without loss of generality. 
Based on  $D_i \in \mathbb R^{i \times i}$, we build the new  metric
$$D_{i+1}= \begin{pmatrix}  D_i & 0 \\ 0 & d_{ii} b_{i+1} \end{pmatrix} $$ 
where $d_{ii}>0$ is the $i$-th diagonal element  of $D_i$. With this choice we obtain
$$D_{i+1} A_{i+1} = \begin{pmatrix} D_i A_i & - b_{i+1}a_{i+1} D_i \bm e_i \\ b_{i+1} a_{i+1} d_{ii} \bm e_i^\T & - d_{ii} b_{i+1} \alpha_{i+1} \end{pmatrix} .$$
Notice that $D_i \bm e_i = d_{ii} \bm e_i$, so we get
$$ L(A_{i+1}, D_{i+1}) = \begin{pmatrix}   L(A_i, D_i) & 0 \\ 0 & - 2 \alpha_{i+1} d_{ii} b_{i+1} + \frac{d}{dt}\{ d_{ii} b_{i+1} \} \end{pmatrix}. $$
From the induction hypothesis, we know that $L(A_i, D_i)$ is diagonal and uniformly negative definite. On the other hand, notice that  $d_{ii} = \prod_{j \in \bar {\mathcal  N}_i} b_j$ where  $\bar {\mathcal  N}_i$ is the set feedback neighbors of node ${\sf x}_i$ discarding ${\sf x}_{i+1}$. Using these two facts and the rule for the derivative of product of functions, $L(A_{i+1}, D_{i+1}) \prec 0$ provided that
\begin{eqnarray*}
- 2 \alpha_{i+1} b_{i+1} \left( \prod_{j \in \bar {\mathcal  N}_i} b_j  \right) +  \hspace*{3.5cm} \\
 +  b_{i+1} \left( \prod_{j \in \bar {\mathcal  N}_i} b_j \right) \left( \frac{\dot b_{i+1}}{b_{i+1}}+  \sum_{j \in \bar {\mathcal  N}_i} \frac{\dot b_j}{b_j} \right) < 0 
\end{eqnarray*}
which is  condition (ii) once it is divided by $\prod_{j \in  {\mathcal  N}_i} b_j$.
 \end{proof}

Despite it is possible to prove Proposition 1 without using recursion, the above proof is more natural from the point of view of a growing network, building larger metrics based on existing smaller ones. This recursive method will also be useful to extend the sign-stability criterion to delayed interconnections and modules (Section IV).


\subsection{The network structure of sign-stable systems.}

To conclude the proof of the main result,  we show that when $\mathcal G$ does not contain cycles of length $3$ or more ---that is, condition (iii) holds---  then the network is composed of cascades of feedback chains. 
We start showing that feedback chains are the minimal building blocks of networks with such constraint, and then we characterize how to interconnect feedback chains while keeping this constraint.

\begin{lemma}Adding a new edge to a feedback chain  produces a cycle of length 3 or more.
\end{lemma}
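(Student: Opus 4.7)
The plan is first to identify the underlying graph-theoretic structure of a feedback chain, and then to analyze what edges can possibly be added to it. Because a feedback chain is built recursively by attaching each new node via a single negative feedback interconnection to exactly one existing node, a routine induction on the number of nodes shows that, if one collapses each bidirectional pair ${\sf x}_i \leftrightarrow {\sf x}_j$ into a single undirected edge, the resulting undirected graph is a tree. In particular, any two distinct nodes ${\sf x}_i$ and ${\sf x}_j$ are joined by a unique undirected path, and each edge of that path corresponds to a pair of opposite directed edges in $\mathcal G$.

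Next, I would consider an arbitrary new directed edge ${\sf x}_i \to {\sf x}_j$ added to the chain. Since all nodes already carry a self-loop (the terms $a_{ii}=-\alpha_i$) and since self-loops are excluded from the definition of a cycle in the paper, I can restrict attention to the case $i \neq j$. If ${\sf x}_i$ and ${\sf x}_j$ are adjacent in the underlying tree, then by construction both directed edges ${\sf x}_i \to {\sf x}_j$ and ${\sf x}_j \to {\sf x}_i$ already belong to $\mathcal G$, so there is in fact no new edge between them to add. Hence the only nontrivial case is that ${\sf x}_i$ and ${\sf x}_j$ are non-adjacent in the tree.

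In that case the unique tree path joining ${\sf x}_j$ to ${\sf x}_i$ has some length $k \geq 2$. Since every edge of this path corresponds to a negative feedback pair, reversing the path yields a directed path ${\sf x}_j \to \cdots \to {\sf x}_i$ in $\mathcal G$ of length $k$. Concatenating with the newly added edge ${\sf x}_i \to {\sf x}_j$ produces a directed cycle of length $k+1 \geq 3$, which is the claim.

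The main obstacle is not technical but taxonomic: one must be careful to enumerate the cases (self-loop, new edge between adjacent nodes, new edge between non-adjacent nodes) and observe that the first two are either excluded by the definition of cycle or vacuous because the edge is already present, leaving only the non-adjacent case, which is handled cleanly by the tree structure.
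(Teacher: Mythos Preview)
Your proof is correct and follows essentially the same line as the paper: both exploit that a feedback chain is strongly connected (you phrase this via the underlying tree with bidirectional edges), observe that a genuinely new edge must join non-adjacent nodes so the pre-existing directed path between its endpoints has length at least $2$, and then concatenate to obtain a cycle of length $\geq 3$. Your version is simply more explicit about the tree structure and the case analysis (self-loops, adjacent versus non-adjacent endpoints), whereas the paper compresses all of this into the single remark that strong connectivity plus ``the edge is new'' forces the existing path to have length at least $2$.
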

\begin{proof} By construction, a feedback chain is strongly connected (i.e., there is a direct path between any two nodes) and there is no cycle of length 3 or more. 
Now suppose we add a new edge ${\sf x}_j \rightarrow {\sf x}_i$.  
Yet, there is already a path in the network from ${\sf x}_i$ to ${\sf x}_j$ and since the edge is new, the length of this path should be $2$ at least. 
Hence the cycle obtained by the new edge and the already present path has length 3 or more.
\end{proof}

\begin{lemma} Two feedback chains can be interconnected without creating a cycle of length 3 or more only by (i)  a two-node feedback (so both feedback chains are merged into a larger one), or (ii) in cascade.
\end{lemma}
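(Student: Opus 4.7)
The plan is to exploit the fact that each feedback chain is strongly connected (immediate from its recursive construction via two-node feedbacks) and then to exhaust the possible ways of adding interconnection edges between two chains $\Sigma_A$ and $\Sigma_B$.

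First, I would classify every interconnection edge as either forward ($A \to B$) or backward ($B \to A$); a two-node feedback between the chains contributes one of each. The goal is to show that any configuration other than ``all edges in a single direction'' or ``exactly one reciprocate pair and nothing else'' produces a cycle of length $\geq 3$.

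Second, the heart of the argument is a path-concatenation step. Given one forward edge ${\sf x}_i \to {\sf x}_j$ (with $i \in A$, $j \in B$) and one backward edge ${\sf x}_l \to {\sf x}_k$ (with $l \in B$, $k \in A$), concatenate them with a directed path from ${\sf x}_j$ to ${\sf x}_l$ inside $\Sigma_B$ and from ${\sf x}_k$ to ${\sf x}_i$ inside $\Sigma_A$, both of which exist by strong connectivity. A short case analysis on the equalities $j = l$ and $k = i$ shows that the resulting cycle has length exactly $2$ only when both internal paths are empty and the two edges together form a single two-node feedback on $\{{\sf x}_i, {\sf x}_j\}$; in every other subcase the cycle has length at least $3$. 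The same argument, applied with the ``forward'' and ``backward'' edges coming from the two halves of an existing reciprocate pair, rules out a two-node feedback coexisting with any additional interconnection edge.

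Third, the previous step leaves only two surviving configurations: (a) all interconnection edges go in a single direction, which is exactly the cascade of option (ii); or (b) there is exactly one reciprocate pair between the chains and no other interconnection edge, which is option (i). In case (b), I still need to verify that the merged graph is a feedback chain in the recursive sense. The cleanest route is to take the interconnecting feedback ${\sf x}_i \leftrightarrow {\sf x}_j$ as a new seed and re-root the construction of $\Sigma_A$ at ${\sf x}_i$ and of $\Sigma_B$ at ${\sf x}_j$, then append the remaining nodes one at a time in the re-rooted order so that each added node inherits a unique existing feedback neighbor. The main obstacle is precisely this re-rooting step; it rests on the observation that, by Lemma 1, the undirected skeleton of a feedback chain (obtained by collapsing each two-node feedback to a single undirected edge and discarding self-loops) is a tree, and any finite tree admits a leaf-at-a-time growth procedure starting from any of its edges, which translates back into a valid recursive construction in the required order.
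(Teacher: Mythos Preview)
Your proposal is correct and follows the same core idea as the paper: both arguments hinge on the strong connectivity of each feedback chain, so that one interconnection edge in each direction can be closed into a cycle via internal paths. The paper's version is considerably terser: for part (ii) it simply asserts that a forward and a backward edge between the two chains yield a cycle of length $\geq 3$ by strong connectivity, without the case analysis on $j=l$, $k=i$ that you spell out; and for part (i) it observes that a single two-node feedback merges the chains into one larger chain and then invokes Lemma~1 to rule out any further edge, rather than reusing the path-concatenation argument as you do. Your explicit re-rooting step, based on the tree structure of the undirected skeleton, fills a gap the paper leaves open (it merely states that the merge ``creates a larger feedback chain'' without verifying that the recursive definition is met). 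So your route is the same in spirit but more complete; the paper's shortcut via Lemma~1 is slightly quicker for the ``no extra edges after a two-node feedback'' part.
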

\begin{proof}
$(i)$
If we interconnect two chains by a feedback   we create a larger feedback chain, see Fig.\ref{fig:schwarz-forms3}a. Lemma 1 shows that adding another edge to this network creates a cycle of length 3 or more, Fig.\ref{fig:schwarz-forms3}b.
$(ii)$ Consider an edge going from chain $\Sigma_a$ to chain $\Sigma_b$ and suppose there is an another edge going from $\Sigma_b$ to $\Sigma_a$. This creates a cycle of length 3 or more, because each feedback chain is strongly connected  (see Fig.\ref{fig:schwarz-forms3}d). Consequently, edges can only go  from one feedback chain to the other, Fig.\ref{fig:schwarz-forms3}c.
\end{proof}

\begin{proposition}  Under conditions (i) and (iii), the network $\mathcal G$ of the system is a cascade of negative feedback chains.\end{proposition}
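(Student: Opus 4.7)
The plan is to identify the maximal feedback chains as the non-trivial strongly connected components (SCCs) of $\mathcal G$, and then invoke the fact that the condensation of any directed graph is acyclic to obtain the cascade structure. First I would observe that every directed cycle in $\mathcal G$ has length exactly $2$: by condition (iii) no cycle has length $\geq 3$, so every cycle is a reciprocated pair of edges, which by condition (i) is a negative feedback pair. Let $H$ denote the spanning subgraph of $\mathcal G$ whose edges are exactly those lying in such a 2-cycle. Since every edge inside an SCC must belong to some cycle, every internal edge of an SCC lies in $H$, and conversely two nodes joined by a path in $H$ lie in the same SCC.

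Fix a non-singleton SCC $S$, and let $T$ be the undirected graph on $S$ obtained from $H$ by identifying each feedback pair with a single undirected edge. Then $T$ is connected because $S$ is strongly connected, and $T$ contains no undirected cycle of length $\geq 3$: such a cycle ${\sf x}_{i_1}-{\sf x}_{i_2}-\cdots-{\sf x}_{i_k}-{\sf x}_{i_1}$ would lift, by consistently choosing one orientation of each feedback pair, to a directed cycle of length $k\geq 3$ in $\mathcal G$, contradicting condition (iii). Hence $T$ is a tree. I would then argue by induction on $|S|$ that $S$ is a feedback chain: the base case $|S|=2$ is a two-node negative feedback; for the inductive step pick any leaf ${\sf x}_\ell$ of $T$, so that removing ${\sf x}_\ell$ leaves a smaller tree which, by the induction hypothesis, is a feedback chain, and re-attaching ${\sf x}_\ell$ via its unique feedback neighbor matches precisely the recursive construction step in the definition.

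Finally, the condensation of $\mathcal G$ into SCCs is itself a DAG, and in view of Lemma 2 two distinct SCCs cannot be joined by a feedback pair, so the inter-SCC edges are one-directional. A topological ordering of the condensation then realizes $\mathcal G$ as a cascade of the feedback chains identified above (any singleton SCC contributing a trivial one-node component with a negative self-loop). The main subtlety I expect is the undirected-to-directed cycle lift in the second step, along with checking that leaf-removal is compatible with the recursive feedback-chain construction; beyond that, the argument is a clean combinatorial unpacking of conditions (i) and (iii) through the SCC decomposition together with Lemmas 1 and 2.
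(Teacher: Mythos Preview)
Your argument is correct, and in fact considerably more explicit than what the paper offers. The paper's own proof is a single sentence: Proposition 2 is stated as a ``direct consequence of Lemma 1 and 2,'' leaving the reader to assemble the pieces. Your route is genuinely different in presentation: rather than reasoning operationally about how feedback chains can be extended or joined, you invoke the standard SCC decomposition, show that each non-singleton SCC collapses to an undirected tree (via the cycle-lifting observation), and then identify trees with feedback chains by leaf-stripping. The condensation-is-a-DAG step then gives the cascade for free.

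Two minor remarks. First, your appeal to Lemma 2 for the claim that distinct SCCs cannot be joined by a feedback pair is unnecessary: a 2-cycle between two nodes forces them into the same SCC by definition, so this is immediate. Likewise, Lemma 1 plays no role in your argument at all, despite being mentioned at the end; your tree characterization already shows that no extra edges can exist inside an SCC without creating a long cycle, which is precisely the content of Lemma 1. Second, the paper's recursive definition of a feedback chain starts from two nodes, so singleton SCCs are not, strictly speaking, feedback chains; your parenthetical handling of them as trivial one-node components in the cascade is the right move, but it is worth flagging that this is a slight abuse relative to the paper's definition. Neither point affects the validity of your proof.
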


The proof of Proposition 2 is a direct consequence of Lemma 1 and 2. Figure 2a-2d illustrates these lemmas. 
Together with Proposition 1, Proposition 2  completes the proof of Theorem 1 because the cascade of contracting systems is  contracting \cite{lohmiller1998contraction}.
Corollary 1 follows from Theorem 1 because if $b_{ij}$ is constant then its logarithmic derivative is zero.

\begin{figure}[htbp]
\begin{center}
\includegraphics[width=7.5cm]{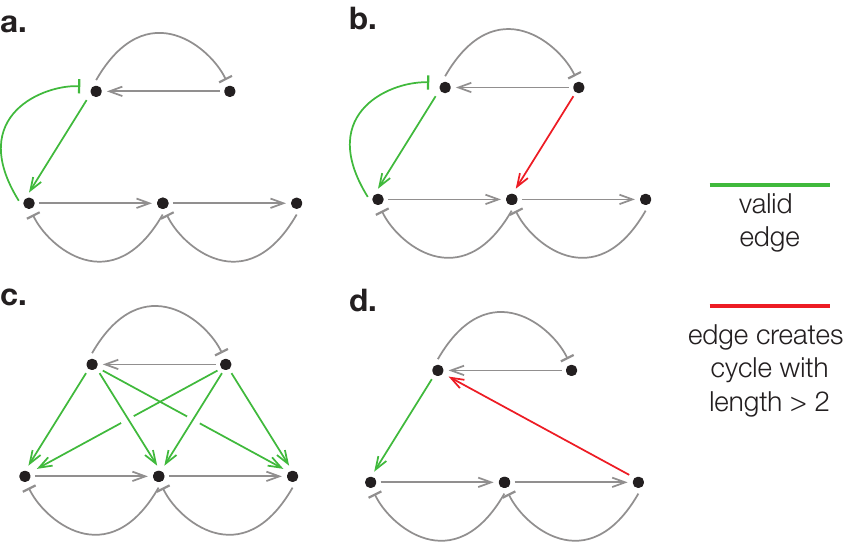}
\caption{{\bf Interconnection of feedback chains without creating cycles of length $> 2$.} {\bf a.} Interconnecting feedback chains using two-node feedback does not create cycles of length $>2$, merging together two chains into a larger one (Lemma 2-i). {\bf b.}  Adding any edge to a feedback chain always create cycles of length $>2$ (Lemma 1). {\bf c.} Interconnecting two feedback chains in cascade does not create cycles of length $>2$ (Lemma 2-ii). {\bf d.} If the interconnection between two feedback chains have different directions and are not feedback to the same node, they necessarily create a cycle with length $>2$ (Lemma 2-ii).
}
\vspace*{-0.5cm}
\label{fig:schwarz-forms3}
\end{center}
\end{figure}


\section{Discussion and extensions}

\subsection{Delayed interconnections.}

Delays in interconnections happen in natural and technological systems. Here we remark that for systems with uniformly bounded interactions,  there exists a critical threshold for the contraction rates of the nodes such that the system is contracting for \emph{any} value of delay. 


\vspace*{0.1cm}
\noindent \emph{Example 3:} Consider a two-node negative feedback interconnection with delays $T_1, T_2 \geq 0$, which corresponds to the following  differential system
\begin{equation*}
\begin{split}
\dot \delta_{x_1}(t) =& -\alpha_1 \delta_{x_1}(t) + a_{12} \delta_{x_2}(t- T_2), \\
\dot \delta_{x_2}(t) =& -b_{12} a_{12}  \delta_{x_1}(t-T_1)  - \alpha_2  \delta_{x_2}(t). 
\end{split}
\end{equation*}
Suppose there exits a constant $\Gamma \geq 0$ such that $|a_{12}(x,t)| b_{12}(x, t)  \leq \Gamma, \forall (x,t)$. 
Consider  $V(\delta_x) = \delta_x^\T D_{12} \delta_x$ with $D_{12}=\diag\{b_{12},1\}$ as previously chosen. In contrast to our previous analysis, the crossed terms do not cancel due to the delays
\begin{equation*}
\begin{split}
\dot V =& \delta_x^\T(t) L(A_{12},D_{12}) \delta_x(t) +
\\& 2 a_{12} b_{12} \left [ \delta_{x_1}(t) \delta_{x_2}(t-T_2)- \delta_{x_1}(t-T_1)\delta_{x_2}(t)    \right ] ,
\end{split}
\end{equation*}
but note that the first term of the above equation is still negative. Using the upper bound for the interactions $\Gamma$ and the inequality 
\begin{equation}
\label{schwarz}
2 | \delta_{x_i}(t) \delta_{x_j}(t-T_j)| \leq    \delta_{x_i}^2(t) +  \delta_{x_j}^2(t-T_j) 
\end{equation}
 we obtain $\dot V \leq \delta_x^\T(t) L(A_{12},D_{12}) \delta_x(t) +\Gamma [ \delta_{x_1}^2(t) + \delta_{x_2}^2(t-T_2) + \delta_{x_1}^2(t-T_1)+ \delta_{x_2}^2(t)     ]$.
Consider now the strictly positive function
$$V_{12}=  \int_{t-T_1}^t \delta_{x_1}^2(\sigma)  d\sigma + \int_{t-T_2}^t \delta_{x_2}^2(\sigma)  d\sigma$$
with time derivative
$ \dot V_{12} = \delta_{x_1}^2(t) - \delta_{x_1}^2(t-T_1) +\delta_{x_2}^2(t) - \delta_{x_2}^2(t- T_2)$.
When we combine these two functions $V_{tot}=V +   \Gamma V_{12}$ the delayed terms  cancel out:
$$\dot V_{tot} \leq - (2\alpha_1 b_{12} - \dot b_{12} - 2 \Gamma) \delta_{x_1}^2(t) - 2(\alpha_2 - \Gamma) \delta_{x_2}^2(t).$$
Consequently, if the contraction rates of the isolated nodes satisfy
\begin{equation}
\label{delay-2}
\alpha_1 > (0.5 \dot b_{12} + \Gamma)/ b_{12}, \quad \alpha_2 > \Gamma,
\end{equation}
then the system is contracting. In general, the bound $\Gamma$ may increase with the state. However, when the interconnection functions are independent of the state (e.g., time varying functions),  condition \eqref{delay-2} is sufficient to ensure contraction for \emph{any} value of the delays. Indeed, in the case of linear time-invariant systems, the characteristic equation of the system is
 $$ 1 + L(s)= 1+  \frac{a_{12} b_{12} e^{-s(T_1 + T_2)}}{(s+\alpha_1) (s+\alpha_2) } $$
 where $s \in \mathbb C$ is the Laplace variable. Observing that for any given $\Gamma \geq0$ there exists constant $\alpha_1, \alpha_2>0$ such that $|L(\imath \omega) | < 1$ for all $T_1, T_2 \geq 0$, the Nyquist criterion immediately implies that the system is stable for \emph{any} delay. 
\vspace*{0.1cm}

The method of the above example applies directly to the general case. We start with $V=\delta_x^\T D \delta_x$ with the metric $D$ constructed from Proposition 1. Next we apply  \eqref{schwarz} to replace the cross terms by sums of squares. Finally we add 
 $V_{i:k} = \sum_{j=i}^k \int_{t-T_j}^t \delta_{x_j}^2(\sigma) d \sigma$ to replace the delayed squared terms by squared terms without delays. 
 Note this is not restricted to the particular network structure of sign-stable systems.

\subsection{Time-varying asymmetries need large enough contraction rates.}

The linear sign-stability criterion \cite{maybe1969qualitative,jeffries1987qualitative} and the nonlinear criterion with constant asymmetries (Corollary 1) do not require the ``large enough'' contraction rate condition for $\alpha_i$ ---condition (ii) of Theorem 1--- which may be difficult to establish in practice.
In this sense, both criteria are purely qualitative (despite its necessary to establish linearity or  constant asymmetries in advance).
Nevertheless, even in the linear case, the condition of large enough contraction rates turns out to be necessary when the asymmetries are time-varying. 
This condition  can also be expressed in terms of how ``fast'' the asymmetries need to be.
We illustrate these two statements using the following example.

\vspace*{0.1cm}
\noindent \emph{Example 4:} Consider a linear time-varying instance of the two-node negative feedback interconnection \eqref{system-2}, using $\alpha_1 = \alpha_2 = \alpha=$const., $a_{12}=1$ and $b_{12}  = b(t)= 1 + 0.9 \sin (t) >0$.
Using numerical simulations,  we conclude this system is unstable for $\alpha \leq 0.05$ but stable for $\alpha \geq 0.09$, see Fig. \ref{fig:unstable}.
In other words, it is necessary a large enough contraction rate $\alpha>0$ for stability.

\noindent 
Next recall from \cite[Theorem 2]{aeyels1999exponential} that the linear system $\dot {\delta}_x(t) = A(\omega t) \delta_x(t)$ is globally exponentially stable $\forall \omega > \omega^*$ for some finite $\omega^* >0$ if there exists $T>0$ such that the ``average'' $\int_t^{t+T} A(\tau) d \tau$ is uniformly negative definite for all $t \geq 0$. 
In order to apply this result, we first rewrite our example in coordinates $\Theta(t) = \diag\{1, b(t)^{-1/2} \}$.
In these coordinates the Jacobian reads $F = \Theta A \Theta^{-1} + \dot \Theta \Theta^{-1}$, so we obtain
   $F(t) = \diag\{- 2 \alpha,- 2\alpha + \dot b(t)/ b(t)  \}$.
 Now we can compute  
 $$ \int_t^{t+T} F(\tau) d \tau = \begin{pmatrix} - 2 \alpha T & 0 \\ 0 & - 2 \alpha T + \int_t^{t+T} \frac{d}{dt} \ln (b) d \tau \end{pmatrix}.$$
The above matrix is negative definite in the identify metric if
$$-2 \alpha T + \ln \frac{b(t + \tau)}{b(t)} < 0 $$
which can be rewritten as $b(t+T ) < e^{2 \alpha T} b(t)$. 
For any $\alpha>0$ there is always $T>0$ satisfying this condition if  the function $b_{12}(t)$ is uniformly bounded from above and below.  Indeed, if $b_{12}(t) \geq \epsilon >0$ and $b_{12}(t) \leq \gamma$ we can choose any $T > (1/2 \alpha) \ln(\gamma/\epsilon)$.
This shows that for any $\alpha>0$, the system is contracting if  $\omega$ is large enough. In other words, if the asymmetry is bounded from above and below and  fast enough.
In our example with  $b(\omega t) = 1 + 0.9 \sin(\omega t)$ and $\alpha=0.05$, the system becomes stable by increasing the frequency to $\omega = 1.1$.  

\noindent On the other hand, Theorem 1 also implies that for any given $\alpha>0$ there exists $\omega_*>0$  small enough such that the system remains contracting for any $\omega < \omega_*$. Indeed, our example with  $\alpha=0.05$ becomes stable by decreasing the frequency of $b(\omega t)$ to $\omega =0.5$. 
\vspace*{0.1cm}

\begin{figure}[htbp]
\begin{center}
\includegraphics[width=9cm]{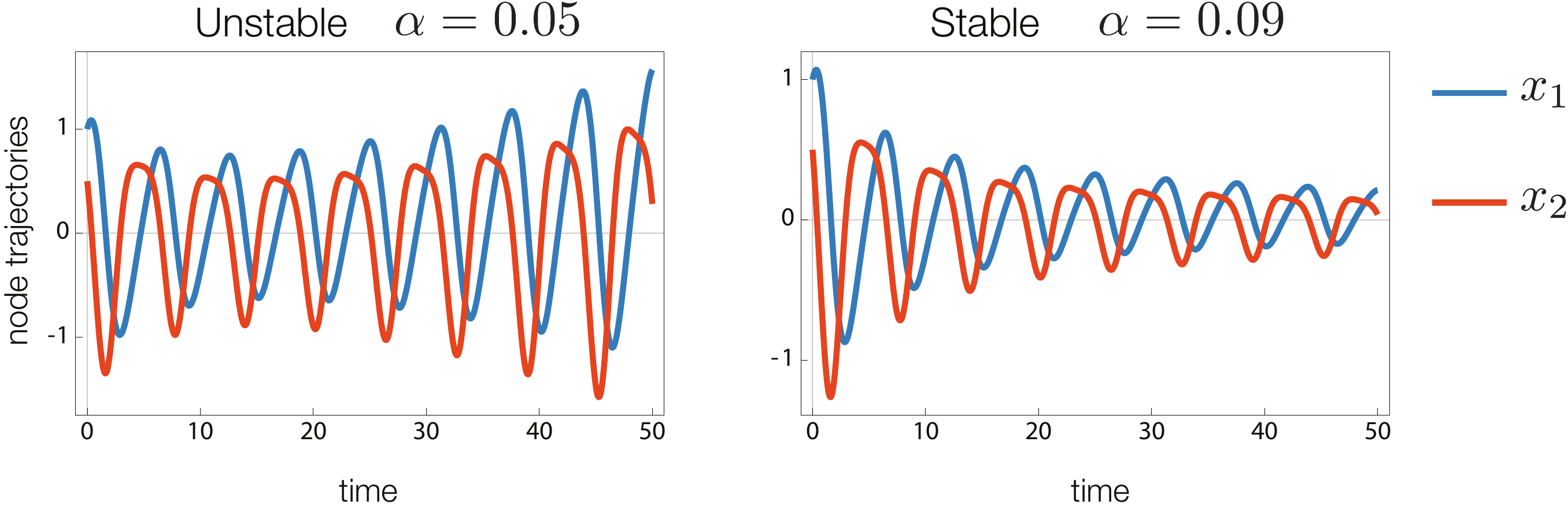}
\caption{{\bf For systems with time-varying asymmetry, a large enough contraction rate is necessary.} Trajectories of the system of Example 3 with initial condition $x(0) = (1, 0.5)^\T$. We observe that the system is unstable  if the contraction rates of the nodes $\alpha \leq 0.05$, but stable if $\alpha \geq 0.09$.
}
\label{fig:unstable}
\end{center}
\end{figure}

By regarding the differential system $\dot {\delta}_x  = A(x(t), t) \delta_x$ as a linear time-varying system and using the recursive proof of Proposition 1, it is straightforward to extend the analysis of the above example to feedback chains and hence to sign-stable systems.
Consequently, time-varying asymmetries do not detriment stability if they are either slow enough or fast enough compared to the system dynamics. 
Biological and man-made systems  often have   time scale separations, so we can exploit them to establish the sign-stability of networked systems \cite{simon1962architecture, del2013contraction}. 
With the help of singular perturbation techniques,  time scale separations in the nodal dynamics can be used to apply the sign-stability criterion  to slow nodes only. 
Importantly, Example 4  implies that a pure qualitative criterion for the stability of  systems with time-varying asymmetries is impossible.

\subsection{Vector nodal dynamics.}

Consider  the negative feedback interconnection between two \emph{modules} with states $x_1 \in \mathbb R^{n_1}$ and $x_2 \in \mathbb R^{n_2}$, corresponding to the following differential system
\begin{equation*}
\begin{split}
\dot \delta_{x_1} = A_{11} \delta_{x_1} + A_{12} \delta_{x_2}, \quad 
\dot \delta_{x_2} = - b_{12} A_{12}^\T \delta_{x_1} + A_{22} \delta_{x_2},
\end{split}
\end{equation*}
with $b_{12}(x,t) >0$ scalar and $A_{ij}(x,t) = \partial f_i(x, t)/\partial x_i \in \mathbb R^{n_i \times n_j}$. Suppose that, when isolated, each module is contracting with metric $M_i$ and that these metrics satisfy the \emph{compatibility} condition $M_1 A_{12} = A_{12} M_2$.
Using the metric $M= \blkdiag \{ b_{12} M_1, M_2\}$ we obtain
$$L(A, M) = \begin{pmatrix} b_{12}[ L(A_{11},M_1) + (\dot b_{12}/b_{12}) M_1] & 0 \\ 0 & L(A_{22}, M_2) \end{pmatrix} $$
and the system remains contracting if  $L(A_{11}, M_1) +  (\dot b_{12}/b_{12}) M_1 \prec 0$ and $L(A_{22}, M_2) \prec 0$, in complete analogy to the scalar case of Section III-B. Consequently, the sign-stability criterion can be straightforwardly extended to modules by extending Proposition 1, first  modifying condition (ii) as follows
$$L(A_{ii}, M_i) + \sum_{j \in \mathcal N_i} \frac{\dot b_{ij}}{b_{ij}} M_i \prec 0, \quad i=1, \cdots, n; $$
and  second requiring the additional \emph{metric compatibility} condition $ M_i A_{ij} = A_{ij} M_j,   \forall j \in \mathcal N_i$.  
Without this  compatibility condition, the conditions that each module needs to satisfy in order to guarantee stability of the whole network are not local anymore (i.e., do not depend on the module's feedback neighbors only) because the block-diagonal structure is lost.

\section{Concluding remarks}

This paper bridges a theoretical gap by extending the sign-stability criterion to nonlinear systems, providing also extensions to consider modules and delayed interconnections. 
%
%
%
%
 In practice, it is rare that large networked systems  are entirely sign-stable. Rather, the nonlinear sign-stability criterion can be used to identify portions of the system (i.e., modules) that are stable by the ``topological'' design of their interconnection network  \cite{Angulo:15}.
%
%
It is also possible to combine the sign-stability criterion with other interconnections that preserve contraction, such as ``centralized'' ones
\cite{tabareau2006notes}.
From an engineering viewpoint, our results allow to recursively build large networks which automatically preserve stability using simple conditions on times scales, delays and  signs of the interconnections.

\ifCLASSOPTIONcaptionsoff
  \newpage
\fi

\bibliographystyle{IEEEtran}
\bibliography{IEEEabrv,ContractionTheory}
%

\end{document}